\newtheorem{lemma}{Lemma}
\titlespacing{\section}{0pt}{*0.5}{*0.4}
\titlespacing{\subsection}{0pt}{*0.3}{*0.3}
\titlespacing{\subsubsection}{0pt}{*0.3}{*0.2}
\begin{document}
\title{\huge Hybrid Quantum Convolutional Neural Network-Aided\\ Pilot Assignment in Cell-Free Massive MIMO Systems\vspace{-2mm}}

\author{Doan Hieu Nguyen\IEEEauthorrefmark{1}, 
        Xuan Tung Nguyen\IEEEauthorrefmark{1}, Seon-Geun Jeong,
        Trinh Van Chien, Lajos Hanzo, ~\IEEEmembership{Life Fellow,~IEEE},
        Won Joo Hwang, ~\IEEEmembership{Senior Member,~IEEE} \vspace{-1cm}%

\thanks{This work was supported in part by the Quantum Computing based on Quantum Advantage challenge research(RS-2024-00408613) through the National Research Foundation of Korea(NRF) funded by the Korean government (Ministry of Science and ICT(MSIT)); in part by the National Research Foundation of Korea(NRF) grant funded by the Korea government(MSIT)(RS-2024-00336962); in part by Institute of Information $\&$ communications Technology Planning $\&$ Evaluation(IITP) under the Artificial Intelligence Convergence Innovation Human Resources Development(IITP-2025-RS-2023-00254177) grant funded by the Korea government(MSIT); and in part by the IITP(Institute of Information $\&$ Communications Technology Planning $\&$ Evaluation)-ITRC(Information Technology Research Center) grant funded by the Korea government(Ministry of Science and ICT)(IITP-2025-RS-2023-00260098).}

\thanks{Doan Hieu Nguyen, and Seon-Geun Jeong are with the Department of Information Convergence Engineering, Pusan National University, South Korea (e-mail: hieu.nguyendoan@pusan.ac.kr, wjdtjsrms11@pusan.ac.kr). Nguyen Xuan Tung is with Faculty of Interdisciplinary Digital Technology, PHENIKAA University, Yen Nghia, Hanoi, Viet Nam (e-mail: tung.nguyenxuan@phenikaa-uni.edu.vn). Trinh Van Chien is with the School of Information and Communications Technology, Hanoi University of Science and Technology, Vietnam (e-mail: chientv@soict.hust.edu.vn). Lajos Hanzo is with the Department of Electronics and Computer Science, University of Southampton, U.K. (e-mail: lh@ecs.soton.ac.uk). Won-Joo Hwang (corresponding author) is with  School of Computer Science and Engineering, Digital-X AIoT Research Center, School of Computer Science and Engineering, Pusan National University, South Korea (e-mail: wjhwang@pusan.ac.kr).}

\thanks{\IEEEauthorrefmark{1}Equal contribution.}

\thanks{Personal use of this material is permitted. However, permission to use this material for any other purposes must be obtained from the IEEE by sending a request to pubs-permissions@ieee.org. }}



\maketitle
\begin{abstract}
A sophisticated hybrid quantum convolutional neural network (HQCNN) is conceived for handling the pilot assignment task in cell-free massive MIMO systems, while maximizing the total ergodic sum throughput. The existing model-based solutions found in the literature are inefficient and/or computationally demanding. Similarly, conventional deep neural networks may struggle in the face of high-dimensional inputs, require complex architectures, and their convergence is slow due to training numerous hyperparameters. The proposed HQCNN leverages parameterized quantum circuits (PQCs) relying on superposition for enhanced feature extraction. Specifically, we exploit the same PQC across all the convolutional layers for customizing the neural network and for accelerating the convergence. Our numerical results demonstrate that the proposed HQCNN offers a total network throughput close to that of the excessive-complexity exhaustive search and outperforms the state-of-the-art benchmarks. 
\vspace{-5mm}
\end{abstract}

\begin{IEEEkeywords}
Cell-free massive MIMO, Pilot Allocation, Quantum Machine Learning
\end{IEEEkeywords}
\section{Introduction}
Cell-free massive multiple-input multiple-output (MIMO) leads to an innovative wireless communication architecture eliminating cell-boundaries by allowing multiple access points (APs) to jointly serve multiple users over a wide area \cite{tung2024distributed}.  It improves both the spatial diversity, and geographic load-balancing, plus boosts the coverage, throughput, and fairness \cite{yu2023learning}. The users transmit the pilot signals in the uplink to their APs for channel estimation. However, it would be extremely wasteful to have long user-specific pilots, hence they are reused by the co-channel APs, which leads to interference also known as pilot contamination. This phenomenon reduces the quality of channel estimation and degrades the spectral efficiency \cite{wang2024pilot}. Hence, the pilot assignment schemes must be carefully designed for mitigating pilot contamination. We emphasize that the pilot assignment constitutes a challenging combinatorial problem, making the optimal full search computationally prohibitive. In \cite{ngo2017cell}, the authors proposed a greedy pilot assignment scheme for maximizing the minimum data rate by mitigating the pilot contamination. Zaher \textit{et al.} \cite{zaher2022learning} introduced a so-called master-AP pilot assignment scheme, where each user identifies its AP having the highest channel gain as the master AP. The pilot signals are then assigned by minimizing the mutual interference at the master APs. A location-based pilot allocation strategy was introduced in \cite{nguyen2023efficient}, which divides the coverage area into grid-based regions and assigns the pilot signals into disjoint subsets. Explicitly, each region is assigned a subset, so that users sharing the same pilot are geographically distant, hence minimizing interference across the network. However, these heuristic methods focus on interference reduction, rather than directly maximizing the total ergodic throughput. In \cite{kim2020deep}, a convolutional neural network (CNN) was shown to perform well in the testing phase by utilizing an exhaustive search to gather data for training labels. However, utilizing large number of trainable parameters makes CNN becomes burdensome for large-scale systems including multiple users and APs.

Deep neural networks (DNN) having numerous hyperparameters usually require a long time for convergence, hence necessitating a new design for supporting large-scale communication systems. Quantum Machine Learning (QML), leveraging Parameterized Quantum Circuits (PQCs) as computational layers in classical models, presents a promising alternative to former approaches due to its unique quantum advantages. These advantages arise from the fundamental properties of qubits and quantum circuits: (1) Superposition, allowing qubits to exist in both 0 and 1 states simultaneously \cite{fan2023hybrid}; (2) Tensor product structure, allowing qubits to encode exponentially more states than classical bits \cite{du2025quantum}; and (3) Unitary evolution, preserving superposition throughout computation in quantum circuits. By exploiting these properties, QML models can harness the superposition to simultaneously process multiple states, theoretically reducing the runtime below those of classical counterparts \cite{fan2023hybrid}. Moreover, quantum states are well-suited for handling large-scale datasets, despite utilizing a reduced number of hyperparameters \cite{hanzo2025quantum}. \textcolor{black}{Recently, QML has attracted substantial research attention, especially in the classification tasks \cite{hur2022quantum, jeong2023hybrid, fan2023hybrid, oh2020tutorial}. Especially, QML has demonstrated superior performance over classical CNNs in multi-label image classification tasks \cite{fan2023hybrid, oh2020tutorial}. However, prior studies often overlook critical factors such as the number of trainable parameters in both classical and quantum components, as well as the influence of noise in quantum environments \cite{oh2020tutorial}. These limitations necessitate a carefully designed hybrid quantum-CNN model that accounts for both model complexity and quantum noise.}




These finding inspired us to propose an HQCNN for handling with the pilot assignment task of cell-free massive MIMO communication, with the objective of maximizing the total network throughput. In contrast to the layer-specific quantum circuit designs used in \cite{hur2022quantum}, we propose to use the same parameterized quantum circuit across all the convolutional layers. This allow us to significantly reduce the number of training parameters and accelerates the convergence during the training phase. The proposed HQCNN is trained using both supervised and unsupervised techniques to analyze models performance from different perspectives. To the best of our knowledge, this is the first study harnessing QML for solving the pilot assignment problem of cell-free massive MIMO communication as a multi-class classification task. Our numerical results demonstrate that the proposed HQCNN converges faster than classical deep learning with a margin of $2\%$ of the global optimum. We also estimate the impacts of noisy environment and scalability of HQCNN in this work.

\section{System Model and Problem Formulation}
\label{sec:SysMode-ProFor}

\subsection{System Model}
We consider a cell-free massive MIMO system operating in the time division duplexing model. Expliciting, $M$ APs - each equipped with $L$ antennas - jointly serve $K$ single-antenna users randomly distributed in the coverage area. In each coherence interval, the system assigns $\tau_p$ orthogonal pilot signals for estimating the propagation channels in the uplink training phase. The binary variable $p_{kt} \in \{ 0,1 \}$ indicates the pilot assignment:  $p_{kt}=1$ if user~$k$ occupies the $t$-th pilot $\pmb{\varphi}_t \in \mathbb{C}^{\tau_p}$, where we have $\| \pmb{\varphi}_t \|^2 =1$. Otherwise, $p_{kt}=0$. Let us denote the channel between AP~$m$ and user~$k$ by $\mathbf{g}_{mk} \in \mathbb{C}^L$. All the elements of $\mathbf{g}_{mk}$ are independent and identically distributed (i.i.d) as $\mathcal{C}\mathcal{N}(0,\beta_{mk})$, with $\beta_{mk}$ representing the large-scale fading (LSF).

\subsubsection{Uplink Pilot Training} 
All the users simultaneously transmit the assigned pilot signals to the APs. The signal received at AP~$m$ is $\mathbf{Y}_m = \sqrt{\tau_p \rho_p} \sum_{k=1}^K\sum\nolimits_{t=1}^{\tau_p}p_{kt} \mathbf{g}_{mk} \bm{\varphi}_t^H + \mathbf{W}_{m}$, where $\rho_p$ is the normalized signal-to-noise ratio (SNR) of each pilot symbol. Furthermore, $\mathbf{W}_{m}\in \mathbb{C}^{L\times \tau_p}$ represents the additive white Gaussian noise (AWGN) whose elements are i.i.d $ \mathcal{C}\mathcal{N}(0,1)$. Similar to \cite{tung2024distributed}, the channel estimate $\hat{\mathbf{g}}_{mk}$ is defined by  exploiting the linear minimum mean square error (LMMSE) estimator at the core network as
\begin{multline}
\label{eq:channel estimation short}
    \hat{\mathbf{g}}_{mk} =  \sqrt{\tau_p \rho_p} c_{mk} \sum\nolimits_{j=1}^K\sum\nolimits_{t'=1}^{\tau_p}\sum\nolimits_{t=1}^{\tau_p}\mathbf{g}_{mk}p_{jt'}\pmb{\varphi}_{t'}^Hp_{kt}\pmb{\varphi}_t \\ + \mathbf{W}_{m} c_{mk} \sum\nolimits_{t=1}^{\tau_p}p_{kt}\pmb{\varphi}_t,
\end{multline}
where $c_{mk} \overset{\triangle}{=} \frac{\sqrt{\tau_p\rho_p}\beta_{mk}}{\tau_p\rho_p\sum\nolimits_{j=1}^{K}\sum\nolimits_{t'=1}^{\tau_p}\sum\nolimits_{t=1}^{\tau_p}\beta_{mj}|p_{jt'}\bm{\varphi}^H_{t'}p_{kt}\bm{\varphi}_{t}|+1}$. The second moment of the channel estimate $\hat{\mathbf{g}}_{mk}$ is expressed in closed form as $\gamma_{mk} \overset{\triangle}{=} \mathbb{E}\left\{|\hat{\mathbf{g}}_{mk}|^2\right\} = \sqrt{\tau_p\rho_p}\beta_{mk}c_{mk}$.

\subsubsection{Downlink Data Transmission}
The APs apply the classic conjugate beamforming technique for the transmit precoding (TPC) of signals by exploiting the channel estimates. In particular, the signal transmitted from AP~$m$ to user~$k$ is $\mathbf{x}_m = \sqrt{\rho_d}\sum\nolimits_{k =1}^K \sqrt{\eta_{mk}}  \hat{\mathbf{g}}_{mk} q_k$, where $\rho_d$ is the maximum normalized transmit power at each AP;  $q_k$ is the symbol intended for user~$k$ with $\mathbb{E}\{ |q_k|^2 \} =1$; and $\eta_{mk}$ is the control coefficient satisfying the power budget constraint of $L\sum\nolimits_{k=1}^K \eta_{mk}\gamma_{mk}\leq 1$. The signal received by user~$k$ is 
\begin{align}
    \label{eq:received-signal-UE9}
    r_k &= \sum\nolimits_{m=1}^M\mathbf{g}_{mk}^H \mathbf{x}_m + w_{k} \nonumber \\
    &= \sqrt{\rho_d}\sum\nolimits_{m=1}^M\sum\nolimits_{j=1}^K\sqrt{\eta_{mk}}\mathbf{g}_{mk}\hat{\mathbf{g}}_{mj}^*q_j+w_k,
\end{align}
where $w_k$ is the AWGN at user~$k$ distributed as $\mathcal{C}\mathcal{N}(0,1)$. From \eqref{eq:received-signal-UE9}, one can obtain the achievable downlink ergodic throughput of user~$k$ as formulated in Lemma~\ref{lemma:Ergodic}.
\begin{lemma} \label{lemma:Ergodic}
 The ergodic throughput of user~$k$ is expressed in closed form in \eqref{eq:downlink rate} with $B$ being the system bandwidth.
\end{lemma}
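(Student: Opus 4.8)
The plan is to establish \eqref{eq:downlink rate} through the use-and-then-forget (UatF) bounding technique that underpins most ergodic-rate analyses in massive MIMO. First I would add and subtract the mean of the desired effective channel gain in the received signal \eqref{eq:received-signal-UE9}, decomposing $r_k$ into four mutually uncorrelated parts: a deterministic desired-signal term proportional to $\mathbb{E}\{\sum\nolimits_{m}\sqrt{\eta_{mk}}\,\mathbf{g}_{mk}^H\hat{\mathbf{g}}_{mk}\}\,q_k$; a beamforming-gain uncertainty term capturing the fluctuation of the effective channel about its mean; a multiuser-interference term collecting all symbols $q_j$ with $j\neq k$; and the thermal noise $w_k$. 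Since the last three terms are zero-mean and uncorrelated with $q_k$, treating them jointly as effective Gaussian noise of matched variance gives a valid achievable-rate lower bound of the form $B(1-\tau_p/\tau_c)\log_2(1+\mathrm{SINR}_k)$, with $\tau_c$ the coherence-block length and the pre-log factor reflecting the uplink training overhead, so that all remaining work amounts to identifying $\mathrm{SINR}_k$ in closed form.

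Next I would evaluate each power term using the LMMSE statistics of Section~\ref{sec:SysMode-ProFor}. For the numerator I would invoke the orthogonality principle of the LMMSE estimator, namely that $\hat{\mathbf{g}}_{mk}$ and the estimation error are uncorrelated, giving $\mathbb{E}\{\mathbf{g}_{mk}^H\hat{\mathbf{g}}_{mk}\}=\mathbb{E}\{\|\hat{\mathbf{g}}_{mk}\|^2\}=L\gamma_{mk}$ and hence a coherent-combining numerator scaling as $\rho_d L^2(\sum\nolimits_{m}\sqrt{\eta_{mk}}\,\gamma_{mk})^2$. The beamforming-uncertainty and noise variances follow from $\mathbb{E}\{|\mathbf{g}_{mk}^H\hat{\mathbf{g}}_{mk}|^2\}-|\mathbb{E}\{\mathbf{g}_{mk}^H\hat{\mathbf{g}}_{mk}\}|^2=L\gamma_{mk}\beta_{mk}$ and from $\mathbb{E}\{|w_k|^2\}=1$, reducing after use of the i.i.d.\ structure of the entries of $\mathbf{g}_{mk}$ to weighted sums of $\eta_{mj}\gamma_{mj}\beta_{mk}$.

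The main obstacle is the multiuser-interference term, which must be partitioned according to whether an interferer $j$ is assigned the same pilot as user $k$. When the pilots are orthogonal, $\hat{\mathbf{g}}_{mj}$ is statistically independent of $\mathbf{g}_{mk}$, so the interference is incoherent and contributes only the variance $L\,\gamma_{mj}\beta_{mk}$ per AP. When $j$ and $k$ share a pilot (the pilot-contamination case), their estimates are obtained by projecting the \emph{same} received pilot observation and are therefore deterministically proportional, $\hat{\mathbf{g}}_{mj}=(\beta_{mj}/\beta_{mk})\,\hat{\mathbf{g}}_{mk}$; substituting this relation converts the cross-correlation mean into a coherent term scaling as $\rho_d L^2(\sum\nolimits_{m}\sqrt{\eta_{mj}}\,\gamma_{mj}\beta_{mk}/\beta_{mj})^2$, whose residual variance coincides with the incoherent formula above. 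Isolating this coherent contribution, adding the incoherent interference summed over all $j$ (which absorbs the $j=k$ beamforming uncertainty) and the unit noise, and dividing the coherent desired power by this total produces $\mathrm{SINR}_k$; inserting it into the pre-logged bound establishes \eqref{eq:downlink rate}. The delicate point where most care is required is the book-keeping of the pilot-sharing cross terms: applying the proportionality relation consistently across APs so that the mean contributions combine coherently as a squared sum while the per-AP variances add incoherently, and verifying that no double counting arises between the desired-user uncertainty term and the interference sum.
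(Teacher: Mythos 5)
Your proposal is correct and follows essentially the same route as the paper's proof sketch: the ``use-and-then-forget'' capacity lower bound (linear precoding applied, TPC vectors forgotten) followed by evaluating the Gaussian moments via the LMMSE statistics, with the pilot-contamination case handled through the proportionality $\hat{\mathbf{g}}_{mj}=(\beta_{mj}/\beta_{mk})\hat{\mathbf{g}}_{mk}$ exactly as needed to produce the coherent interference term in \eqref{eq:downlink rate}. The only cosmetic mismatch is your pre-log factor $B(1-\tau_p/\tau_c)$ accounting for training overhead, which the paper's expression \eqref{eq:downlink rate} omits (it uses $B$ alone); dropping that factor recovers the stated formula verbatim.
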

\begin{proof}
First, a lower bound on the downlink channel capacity is attained by assuming that the APs apply linear precoding and then forget the TPC vectors when computing the throughput. Then, the closed-form expression is obtained by formulating the moments of Gaussian random variables.  
\end{proof}
The ergodic throughput expressed in \eqref{eq:downlink rate} only depends on the channel statistics so that the formulated may rely upon a specific pilot assignment policy for a long period of time.
\begin{figure*}[t]
 \fontsize{9}{9}{   \begin{equation}
        \label{eq:downlink rate}
        \begin{split}
    R_{dk} = B\text{log}_2\left ( 1+ \frac{\rho_d L^2\left(\sum\limits_{m=1}^M\sqrt{\eta_{mk}}\gamma_{mk}\right)^2}{\rho_dL^2\sum\limits_{j=1, j\neq k}^K\left (\sum\limits_{m=1}^M\sqrt{\eta_{mj}}\gamma_{mj}\frac{\beta_{mk}}{\beta_{mj}}\right)^2\left|\sum\limits_{t=1}^{\tau_p}\sum\limits_{t'=1}^{\tau_p}p_{jt}\bm{\varphi}^H_tp_{kt'}\bm{\varphi}_{t'}\right| + \rho_dL\sum_{j=1}^K\sum\limits_{m\in \mathcal{M}(k)}^M \eta_{mj}\gamma_{mj}\beta_{mk}+1} \right )
        \end{split}
    \end{equation}}
    \hrulefill
    \vspace{-2mm}
\end{figure*}

\subsection{Problem Formulation}
We aim for maximizing the total ergodic rate by optimizing the pilot assignment, which is mathematically formulated as

\begin{subequations}\label{opt:problem} 
    \begin{align}
        \underset{\{p_{kt}\} }{\text{maximize}} \quad &\sum\nolimits_{k=1}^K R_{dk} \label{opt:optimization} \\ 
        \text{subject to} \quad &\sum\nolimits_{t = 1}^{\tau_p} p_{kt} = 1, \forall k, \label{opt:constraint1} \\ 
        &p_{kt} \in \{0,1\}, \forall k, t, \label{opt:constraint2} 
    \end{align}
\end{subequations}
where the constraint \eqref{opt:constraint1} ensures that each user only has a single pilot signal. The constraint \eqref{opt:constraint2} makes Problem~\eqref{opt:problem} combinatorial. It is computationally challenging to obtain the optimal solution as the system serves many users. Even though an exhaustive search would indeed find the globally optimal solution, it is excessively complex for large-scale systems. By contrast, heuristic methods are capable of reducing the complexity at the cost of performance reduction. Against this backdrop, we design a hybrid quantum neural network to solve Problem~\eqref{opt:problem} by maximizing the total ergodic throughput.

\section{Proposed Model}
\label{sec:HQCNN}
This section presents the proposed HQCNN consisting of three main parts: pre-processing, quantum convolutional neural network (QCNN), and post-processing as illustrated in Fig. \ref{fig:HQCNN-sys}.

\begin{figure*}[!t]
        \centering
\includegraphics[width=0.7\linewidth]{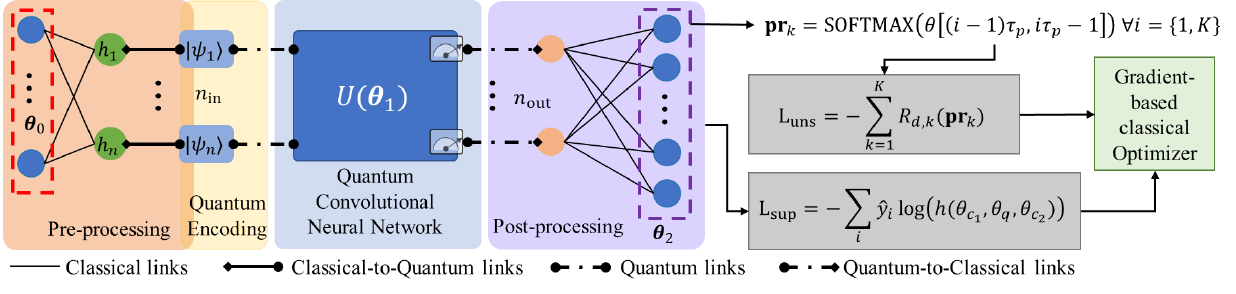}
        \caption{Architecture of the proposed HQCNN model: the pre-processing layer for embedding classical data to quantum state; QCNN for processing quantum states; the post-processing layer for converting outcomes of QCNN. }
        \label{fig:HQCNN-sys}
        \vspace{-3mm}
\end{figure*}

\subsection{Design of HQCNN Model}
\label{sec: Design}
\subsubsection{Pre-processing layer}
\label{sec: pre-processing layer}
To utilize the QML, we must transform the classical data into the quantum state in a higher-dimensional Hilbert space. A fully connected (FC) and a quantum embedding layer (QEL) are first deployed. Therein, the linear layer adjusts the dimension of classical input data to the $n_{\mathrm{in}}$-dimensional aggregated vectors, where $n_{\mathrm{in}}$ is the predefined number of qubits in the quantum circuits. \textcolor{black}{In this work, we adopt the angle embedding method over basis and amplitude embedding techniques. The rationale is that basis embedding is limited to binary inputs, while amplitude embedding demands deeper quantum circuits, thereby increasing complexity and vulnerability to quantum noise \cite{munikote2024comparing}. In contrast, angle embedding produces shorter output vectors, which helps reduce the number of trainable parameters in the pre-processing layer. These advantages contribute to lower circuit complexity and a faster convergence rate during training \cite{munikote2024comparing}.} Note that the initial quantum state $|\psi\rangle$ is represented as
$|\psi\rangle = \bigotimes_{i=1}^{n_{\text{in}}} \left(\cos{(h_i)}|0\rangle + \sin{(h_i)}|1\rangle\right),$
where $h_i$ is the $i$-th element of $\mathbf{h} = \mathbf{\theta_1}^T\mathbf{x}+\mathbf{b}$ in which $\mathbf{x}$ is the classical input vector; $\mathbf{\theta_1}$, and $\mathbf{b}$ are weighting matrix and bias vector of the linear layer, respectively. 

\subsubsection{Quantum convolution layer}
\label{sec: QCL}
A QCNN containts parameterized quantum circuits, each similar to a filter which process a quantum state as its input and outputs a vector of expectation values. As reported in \cite{hur2022quantum}, a QCNN is capable of adopting the core principles of convolutional neural networks, including the convolutional and pooling layers for feature mapping and dimension reduction, as shown in Fig.~\ref{fig:CNN-to-QCNN}a. By harnessing this, a kernel is slid over the input data, performing element-wise operations in order to transform data from the input to the output in CNN. By exploiting parameter-sharing across the layers, the proposed HQCNN is capable of significantly reducing the number of training parameters with the aid of a series of quantum convolution layers (QCLs) and quantum pooling layers (QPLs). Fig.~\ref{fig:CNN-to-QCNN}b illustrates example of one QCNN layer. The QCLs employ PQCs to aggregate quantum information from the adjacent qubits, while the QPLs reduce the network size by measuring half of the qubits. These layers are repeated until the main features of information are successfully extracted.
\begin{figure}[t]
    \centering
    \includegraphics[width=0.8\columnwidth]{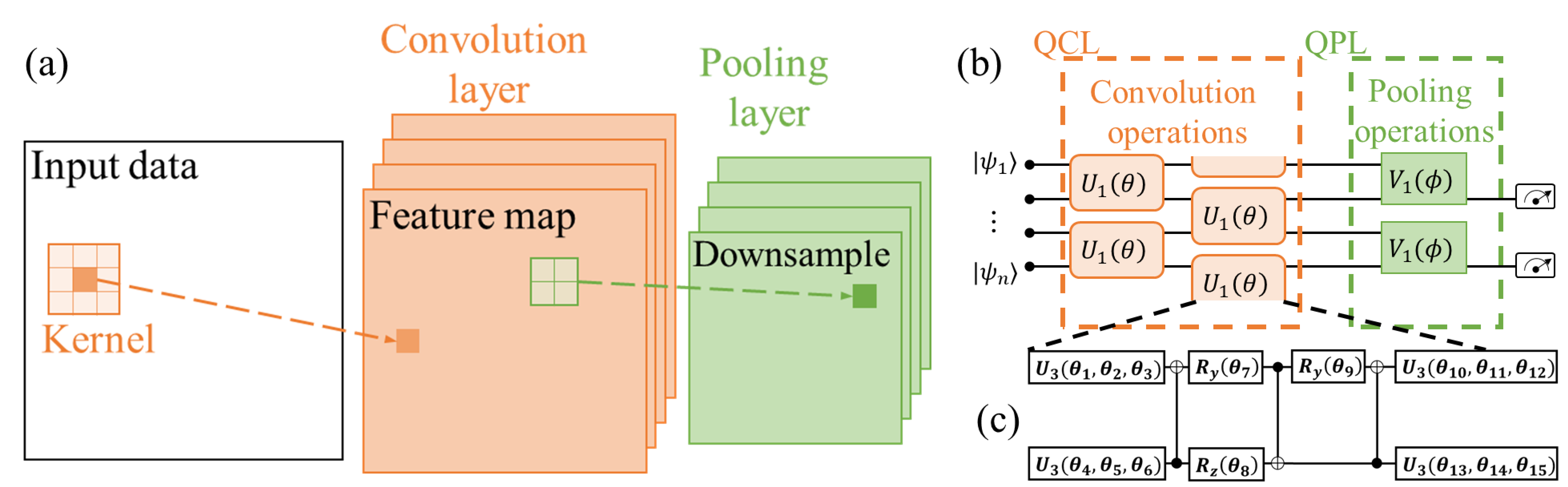}
    \caption{{(a) Illustration of a CNN layer; (b) QCNN layer; and (c) PQC design.}}
    \label{fig:CNN-to-QCNN}
    \vspace{-4mm}
\end{figure}

Again, in the previous QCNN design of \cite{hur2022quantum}, each QCL exploited identical PQC structures but along with different parameter sets. This approach required a total of $T\sum_{i=1}^{\log_2(n)}2^i$ parameters if only a single outcome is measured, where $T$ is the number of parameters per PQC. We therefore propose a novel QCL design inspired by the concept of CNN kernel parameter-sharing \cite{kim2020deep} for reducing the number of learnable parameters. Using the same set of parameters for all the PQCs at each layer reduces the total number of parameters to $T\log_2(n_0)$, where $n_0$ is the initial number of qubits. The exponential parameter reduction capability of our design significantly reduces the computational cost, making the quantum system more efficient. The PQC structure of our neural network design is adopted from the structure of \cite{hur2022quantum}, relying on the detailed illustration seen in Fig.~\ref{fig:CNN-to-QCNN}c.

\vspace{-1mm}
\subsubsection{Post-processing layer}
\label{sec: post-processing layer}

Finally, a fully connected layer is used at the output for processing the collapsed classical outcomes of the QCNN. The post-processing layer takes the $n^{\text{out}}$-outcome expectation values of the QCNN as its input. We view the pilot assignment problem in the scope of a multi-class classification problem with the output of a $(K \times \tau_p)$ matrix.

\subsection{Training Procedure}
\label{sec: Training}
\subsubsection{Inference process}

The proposed HQCNN model performs the pilot assignment by only considering the LSF coefficients hosted by the matrix $\bm{\beta} \in \mathbb{C}^{M \times K}$, since the achievable data rate of each user predominantly relies on its LSF with respect to other APs. The input layer of the proposed hybrid neural network maps the matrix $\bm{\beta}$ to the initial quantum state. In QCNN layer, the output state of the $l$-th layer is
\begin{align}
    |\psi_i(\bm{\theta}_i)\rangle\langle\psi_i(\bm{\theta}_i)| = \text{Tr}_{O_i}\left(U_i(\bm{\theta}_i)|\psi_{i-1}\rangle\langle\psi_{i-1}|U_i(\bm{\theta}_i)^{\dagger}\right),
\end{align}
where $\text{Tr}_{O_i}(\cdot)$ represents the partial trace operation over the subsystem $O_i$; $U_i (\cdot)$ is the parameterized unitary operation combining the quantum convolution and pooling operations; and $\bm{\theta}_i$ denotes the parameters of the PQC in the $i$-th quantum convolutional layer. After processing by all the QCNN layers, the output is calculated as in \cite{zhou2023towards}, given by
\begin{align}
    \label{eq: measurement}
    \langle\mathcal{M}_i\rangle = \langle\bm{\psi}|U(\theta)^{\dagger}A_iU(\theta)|\bm{\psi}\rangle,
\end{align} 
where $|\bm{\psi}\rangle$ is the final quantum state, $U(\bm{\theta})$ is the overall unitary operation composed of all the quantum layers, and $O_i$ are the observables, typically chosen as the Pauli operators. In this study, we use the Pauli-Z operators for the measurement layer. The final outcomes are estimated as expectation values $\langle\mathcal{M}_i\rangle$ to avoid probabilistic measure noise, serving as the input of the classical post-processing layer. In the post-processing layer, we apply a softmax function to each row of the $(K \times \tau_p)$ output matrix to compute the pilot selection probabilities. 

For supervised learning, the pilot assignment orders obtained from \cite{zaher2022learning} are exploited as labels. Users select the specific pilot signals with the highest probability. In this framework, the cross-entropy loss function is employed  as
\begin{align}
    \label{eq: LF-supvervised}
    \text{LF}_{\text{sup}} = -\left\| \sum\nolimits_{k=1}^K\sum\nolimits_{i=1}^{\tau_p} \mathbf{y}_{ki}\log(h(\bm{\beta};\bm{\theta})_{ki})\right\|_1,
\end{align}
where $h(\bm{\beta};\bm{\theta})$ is the prediction probability and $\bm{\theta}$ represents the trainable parameters. Since both the gradient-based optimizers such as the classical backpropagation and quantum parameter shift rule \cite{hanzo2025quantum} are based on the gradient descent method, both classical and quantum machine learning models minimize the loss function iteratively. Hence, if the unsupervised loss function is reformulated as the negative version of the sum-rate function influenced by the pilot probabilities, as in (\ref{eq: LF-unsupervised}), where $\mathbf{q}_k \in \mathbb{R}^{\tau_p}$ denotes the probability of the pilot selection for user $k$, then minimizing the loss function process is similar to maximizing the total throughput function, in line with the objective of Problem (\ref{opt:problem}).

\begin{figure*}[!t]
    \normalsize
    \begin{equation}
        \label{eq: LF-unsupervised}
        \begin{split}
    \text{LF}_{\text{uns}}^{(k)} = -B\sum\limits_{k=1}^K\mathbb{E}\left\{\log_2\left ( 1+ \frac{\rho_d L^2\left(\sum\nolimits_{m=1}^M\sqrt{\eta_{mk}}\gamma_{mk}\right)^2}{\rho_dL^2\sum\nolimits_{j=1,j\neq k}\left (\sum\nolimits_{m=1}^M\sqrt{\eta_{mj}}\gamma_{mj}\frac{\beta_{mk}}{\beta_{mj}}\mathbf{q}_{k}^H\mathbf{q}_j\right)^2 + \rho_dL\sum\nolimits_{j=1}^K\sum\nolimits_{m=1}^M \eta_{mj}\gamma_{mj}\beta_{mk}+1} \right )\right\}.
        \end{split}
    \end{equation}
\hrule
\vspace{-4mm}
\end{figure*}

\subsubsection{Updating process}

In the CNN model, the backpropagation method updates the weights in kernel by computing gradients of the loss function with respect to the parameters and adjusting them using the gradient descent method. Unlike the CNN model having visible activations in hidden layers, the quantum states in QCNNs are not measurable until they collapse, preventing QCNNs from computing gradients via the chain rule. Hence, the QCNNs utilize another gradient-based approach, namely the parameter shift method of \cite{hanzo2025quantum} to estimate the natural quantum gradient, given as 
\begin{align}
        \frac{\partial\langle \mathcal{M} \rangle(\theta)}{\partial\theta} =  \frac{\langle \mathcal{M} \rangle\left(\theta+s\right)+\langle \mathcal{M} \rangle\left(\theta-s\right)}{2\sin(s)},
\end{align}
where $s$ is the parameter shift value. For an arbitrary quantum state $|\psi\rangle = a|0\rangle + b|1\rangle$, the outcome of a QCNN is expressed by a single-qubit unitary matrix as
\begin{equation}
        \begin{pmatrix}
e^{-i\alpha} \cos(\varphi) & -e^{i\beta} \sin(\varphi) \\
e^{-i\beta} \sin(\varphi) & e^{i\alpha} \cos(\varphi)
\end{pmatrix},
\end{equation}
where $\alpha$ is the phase, while $\beta$ and $\varphi$ define the angle of general rotation. Due to the periodic nature of rotation functions, the shift value is selected as $(\pi/2)$ for simplicity. Hence, the gradient is formulated as
\begin{align}
        \frac{\partial\langle \mathcal{M} \rangle(\theta)}{\partial\theta} = \frac{1}{2} \left( \langle \mathcal{M} \rangle\left(\theta+\frac{\pi}{2}\right)+\langle \mathcal{M} \rangle\left(\theta-\frac{\pi}{2}\right)\right).
\end{align}
After getting the gradients, new parameters are calculated similarly to the popular gradient-descent methods, as
\begin{align}
    \theta_{\text{new}}=\theta_{\text{old}}-\epsilon\frac{\partial\langle \mathcal{M} \rangle(\theta)}{\partial\theta}.
\end{align}
Instead of requiring backpropagation, the quantum circuit has to be estimated twice along with $\theta\pm\pi/2$. Hence, by significantly reducing the number of trainable parameters, the computational complexity in our design of the updating process is exponentially lower than those of the benchmark design in \cite{hur2022quantum}. Based on, the construction and training process conceived, the proposed HQCNN can perform the pilot assignment, as formulated in Lemma~\ref{Lemma3}. 
\begin{lemma} \label{Lemma3}
 The proposed HQCNN model relying on $n\in\mathbb{N}$ number of qubits in the PQC, and denoted as $\boldsymbol{\theta}^*$, is capable of solving Problem~\eqref{opt:optimization} for cell-free massive MIMO systems configured by the triplet $\{M , K, \tau_p \}$ via approximating the mapping from the large-scale fading coefficients to the pilot selection probabilities.
 \vspace{-2mm}
\end{lemma}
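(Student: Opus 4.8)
The plan is to recast the statement as a universal-approximation guarantee: I would first argue that the optimal pilot assignment is a deterministic function of the large-scale fading matrix $\bm{\beta}$, and then exhibit parameters $\bm{\theta}^*$ for which the HQCNN output approximates this function to arbitrary accuracy on any compact set of channel statistics.

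First, I would establish the target mapping and its regularity. By Lemma~\ref{lemma:Ergodic}, each $R_{dk}$ in \eqref{eq:downlink rate} depends only on the coefficients $\{\beta_{mk}\}$ and the binary variables $\{p_{kt}\}$, so the objective of \eqref{opt:optimization} is a fixed function of $\bm{\beta}$ once the assignment is chosen. Because the feasible set cut out by \eqref{opt:constraint1} and \eqref{opt:constraint2} is finite, for every $\bm{\beta}$ the maximizer is well defined and induces a mapping $f^*:\bm{\beta}\mapsto\{p^*_{kt}\}$ from a compact domain $\mathcal{B}$ into the set of one-hot pilot-selection rows. This mapping is piecewise constant, with decision boundaries confined to a measure-zero set where the optimum is not unique; on the interior of each decision region I would replace $f^*$ by a continuous surrogate $\tilde{f}^*$ (for instance a softmax of the per-pilot rate margins) that shares the same row-wise $\arg\max$, so that the target to be approximated lies in the space of continuous maps into the probability simplex.

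Next comes the core expressivity step. The pre-processing layer sends $\bm{\beta}$ to the angle vector $\mathbf{h}=\bm{\theta}_1^T\mathbf{x}+\mathbf{b}$ built from the vectorized input $\mathbf{x}=\mathrm{vec}(\bm{\beta})$, the QCNN evolves the encoded state, and the Pauli-$Z$ measurement \eqref{eq: measurement} returns expectation values $\langle\mathcal{M}_i\rangle$ that are trigonometric-polynomial functions of $\mathbf{h}$. I would invoke the Fourier representation of quantum models, in which the accessible frequency spectrum of $\langle\mathcal{M}_i\rangle$ is controlled by the qubit count $n$ and the circuit depth, so that enlarging $n$ enriches the set of realizable features. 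Composing this with the classical universal-approximation theorem applied to the post-processing fully connected layer and softmax, I would conclude that the family $h(\bm{\beta};\bm{\theta})$ is dense, in the uniform norm over $\mathcal{B}$, in the continuous maps into the simplex. An $\epsilon$-argument then finishes the proof: for any $\epsilon>0$ there exist $\bm{\theta}^*$ and $n$ with $\sup_{\bm{\beta}\in\mathcal{B}}\|h(\bm{\beta};\bm{\theta}^*)-\tilde{f}^*(\bm{\beta})\|<\epsilon$, and on inputs whose optimum is strictly separated the row-wise $\arg\max$ of $h(\bm{\beta};\bm{\theta}^*)$ coincides with $f^*(\bm{\beta})$, so the network returns the optimizer of \eqref{opt:optimization}.

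The hard part will be the expressivity step under the parameter-sharing design, since reusing a single PQC across every convolutional layer sharply reduces the number of free quantum parameters and could in principle shrink the reachable frequency spectrum. I would resolve this by decoupling the two halves of the architecture: rather than demanding that the quantum parameters alone realize the approximation, I would let the shared PQC furnish a fixed but sufficiently rich and non-degenerate feature map, and delegate the actual approximation burden to the classical pre- and post-processing layers. Density then follows from the classical universal-approximation theorem, provided the quantum feature map is verified to be injective on $\mathcal{B}$ so that no discriminative information between distinct $\bm{\beta}$ is lost; establishing this injectivity for the specific shared-PQC structure of Fig.~\ref{fig:CNN-to-QCNN}c is the main technical obstacle I anticipate.
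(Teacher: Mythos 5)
Your proposal is correct in spirit but follows a genuinely different route from the paper. The paper's proof (Appendix A) splits the HQCNN into three blocks --- pre-processing MLP $f_0$, QCNN $f_1$, post-processing MLP $f_2$ --- and bounds the deviation from an ideal mapping $g^*(\pmb{\beta})$ by a triangle-inequality decomposition across the stages, invoking the classical universal approximation theorem for the two MLPs and a quantum universal approximation theorem for the QCNN block, yielding the explicit bound $\epsilon_0 + w_0\epsilon_1 n^{-0.5} + w_1\epsilon_2$ in \eqref{eq:Bound}; the quantitative $n^{-0.5}$ dependence on the qubit count is what that route buys. You instead fix the shared PQC as a trigonometric (Fourier) feature map, delegate the approximation burden entirely to the classical layers, and --- unlike the paper --- you construct the target carefully: you observe that the optimal assignment map is piecewise constant with measure-zero decision boundaries, replace it by a continuous softmax surrogate, and close with an $\arg\max$-recovery argument on strictly separated instances. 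The paper silently skips this step: it applies approximation theorems to an "ideal mapping" $g^*$ without addressing that the true optimizer map is discontinuous, and it further assumes $g^*$ factors as $g_3^* \circ g_2^* \circ g_1^*$ through the architecture's intermediate dimensions, which is asserted rather than proved. So on those points your argument is more rigorous. The one step you leave open --- injectivity of the parameter-shared PQC feature map on $\mathcal{B}$ --- is real, but it is the mirror image of a hypothesis the paper also does not verify, namely that the parameter-shared QCNN family (rather than the fully parameterized circuits for which the quantum universal approximation theorem is stated) satisfies that theorem's conditions; neither proof fully discharges the consequences of parameter sharing, and you at least name the obstacle explicitly.
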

\begin{proof}
    The proof is available in the appendix. 
\end{proof}
\vspace{-6mm}

\section{Numerical Results}
\label{sec: Result}

Let us now evaluate the performance of the proposed HQCNN under both supervised and unsupervised training frameworks. For a data-driven scenario, the proposed HQCNN model is compared to both the MLP as well as to the light- and heavy-CNN models, which are inspired by the CNN of \cite{kim2020deep}, regarding the convergence rate of the supervised learning and computational efficiency of the unsupervised framework. Table~\ref{tab:parameter-comparision} analyses the complexity order of each model. While the architecture of both our model and of the MLP rely on $M$, $K$, $\tau_p$ and $n$, the structure of CNN depends on other factors like the kernel size (in pixels) $k^2$, the number of convolutional layers $F$, and the number of input and output channels in each convolution layer $C_{\text{in}}$, $C_{\text{out}}$. Both CNNs use $3\times3$ kernels in two convolutional layers, but the number of output channels are $(8, 16)$ and $(32, 64)$ in light-CNN and heavy-CNN, respectively. In this work,  $8$ qubits are utilized in the quantum model, the number of trainable parameters in HQCNN approximately equal to those of MLP and light-CNN, while much lower than those of heavy-CNN, as a benefit of quantum parameter-sharing harnessed for reducing the computational complexity. The model-based approaches used for our comparison involve the random (RPA), Greedy (GPA) \cite{ngo2017cell},  Master-AP (MPA) \cite{zaher2022learning}, location-based (LPA) \cite{nguyen2023efficient} and exhaustive search (EPAS).

\begin{table}[t]
    \centering
    \caption{Parameters analysis.}
    {\footnotesize
    \resizebox{0.9\columnwidth}{!}{
    \begin{tabular}{|l|l|}
        \toprule
         \textbf{Model} & \textbf{Number of parameters}    \\
         \midrule
         HQCNN          &   $O\left(nK(M+\tau_p) +15 \text{ (quantum parameters)}\right)$\\
         \midrule
         HQCNN \cite{hur2022quantum} & $O\left(nK(M+\tau_p) +15n \text{ (quantum parameters)}\right)$ \\
         \midrule
         CNN    &   $O\left(k^2\left(\sum_{i=1}^FC_{\text{in},i}C_{\text{out},i}+M\tau_pK^2\right)\right)$ \\
         \midrule
         MLP                &   $O\left(nK(M+\tau_p)+n^2\right)$ \\
         \bottomrule

    \end{tabular}
    }
    }
    \label{tab:parameter-comparision}
    \vspace{-5mm}
\end{table}  

We consider a one-square-kilometer area having the system configurations as follows: pilot transmit power $ 100  \mbox{ [mW]}$; access point downlink transmit power $200\mbox{ [mW]}$; bandwidth $20 \mbox{ [MHz]}$; carrier frequency $f = 1.9  \mbox{ [GHz]}$; noise figure of $9  \mbox{ [dB]}$. The LSF coefficient $\beta_{mk}$ models the path-loss and shadow fading as follows $\beta_{mk} = \text{PL}_{mk}\cdot10^{\frac{\sigma_{sh}z_{mk}}{10}}$, where $\text{PL}_{mk}$ represents the path-loss, and $10^{\frac{\sigma_{sh}z_{mk}}{10}}$ represents the shadow fading having the standard deviation $\sigma_{sh}$, and $z_{mk}\sim\mathcal{N}(0,1)$. The path-loss is represented by the three-slope model as
\small
\begin{align*}
        \text{PL}_{mk} \!=\!
    \begin{cases}
    -L\! - \!35 \log_{10}(d_{mk}), \!& \text{if } \!d_{mk} > d_1 \\
    -L \!- \!15 \log_{10}(d_1) \!- \!20 \log_{10}(d_{mk}),\! & \text{if } \!d_0 < d_{mk} \leq d_1 \\
    -L \!- \!15 \log_{10}(d_1) \!- \!20 \log_{10}(d_0), \!& \text{if } \!d_{mk} \leq d_0.
\end{cases}
\end{align*}
\normalsize
This setting was used in \cite{ngo2017cell}, along with similar selections of $L = 141 $, $d_1 = 50$m, $d_0=10$m and $\sigma_{sh}=8$dB. The power control coefficients are uniformly set as $\eta_{mk} = 1/L\sum_{k=1}^K\gamma_{mk}$ for all the APs.

\begin{figure*}[t]
    \centering
    \subfloat[]{
    \includegraphics[width=0.3\textwidth]{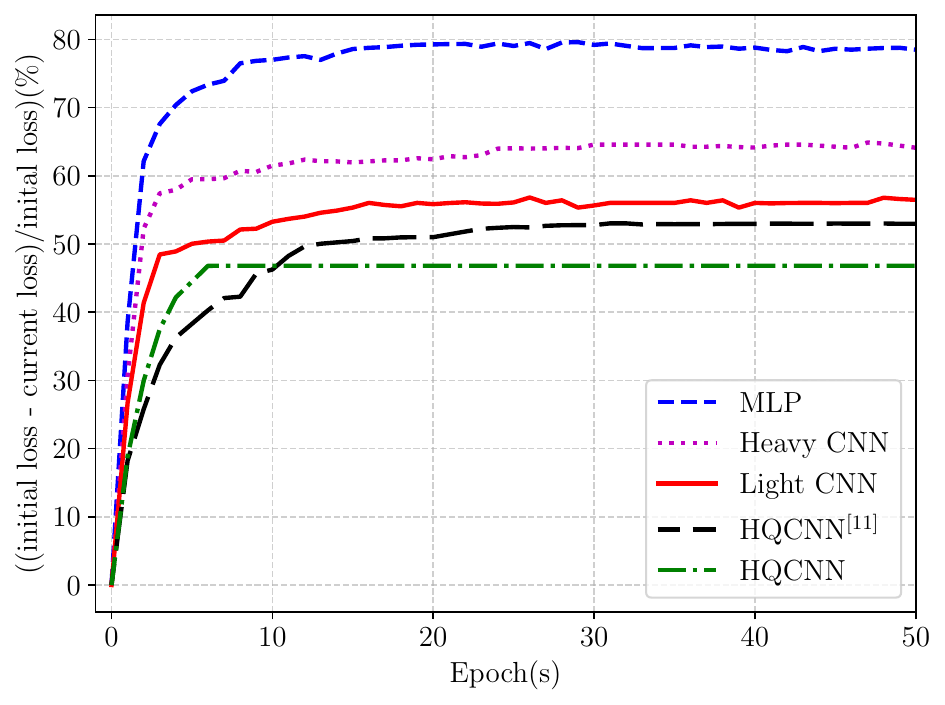} 
        \label{fig:sup-loss}
    }
    \hfill
    \subfloat[]{
    \includegraphics[width=0.3\textwidth]{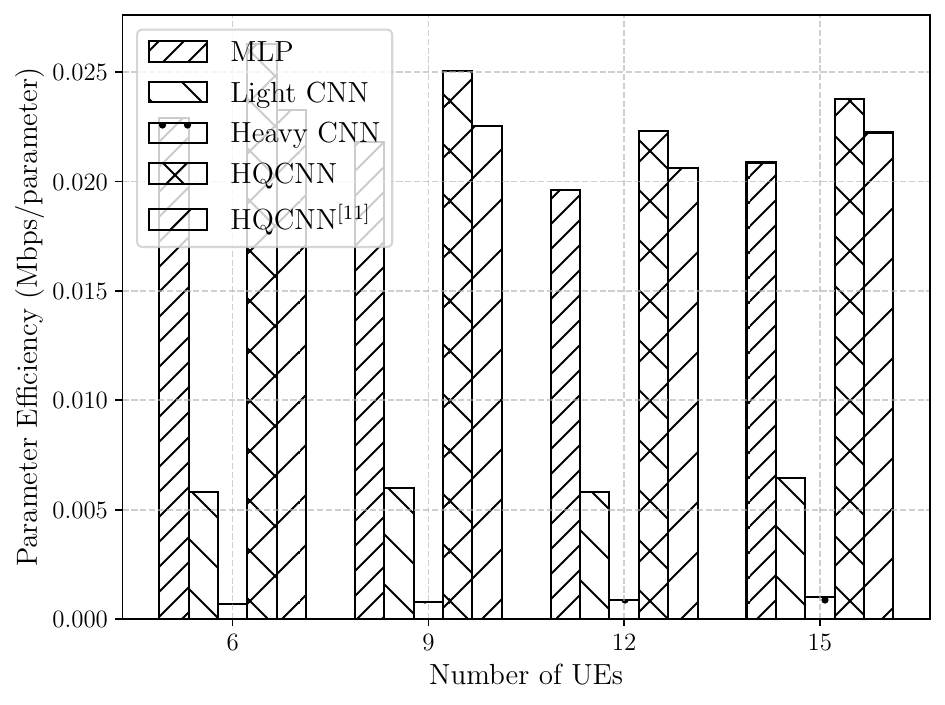} 
        \label{fig:param_efficiency}
    }
    \hfill
    \subfloat[]{
    \includegraphics[width=0.3\textwidth]{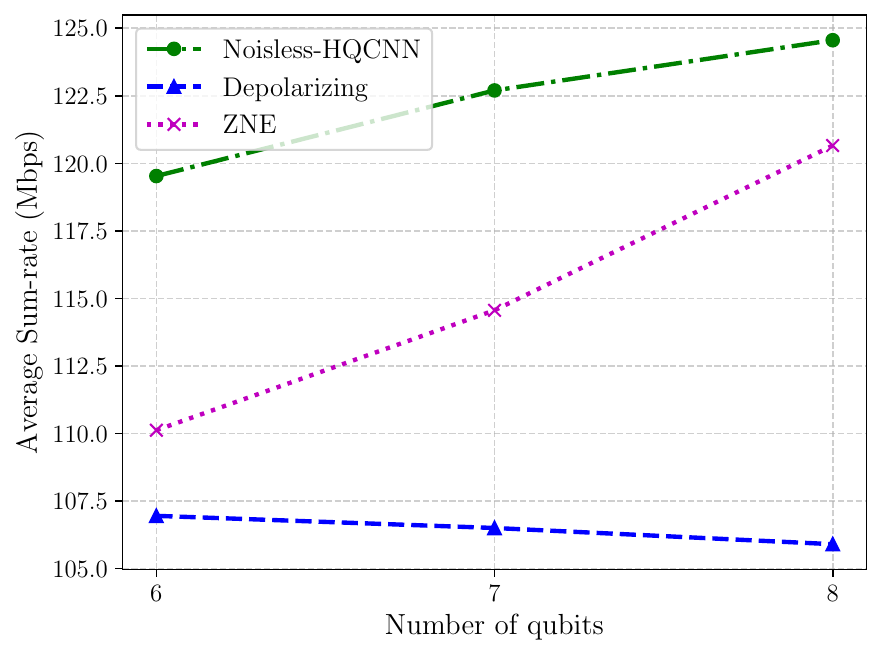} 
        \label{fig:noise}
    }
    \caption{(a) Proportion of training loss reduction relative to the initial loss value versus the training epochs (supervised);  (b) Parameter efficiency of learning-based benchmarks (unsupervised); and (c) the total ergodic throughput with $(M,L,K,\tau_p) = (40,2,20,10)$ in noiseless- and noise-simulation (unsupervised).}
    \label{fig:result}
    \vspace{-0.35cm}
\end{figure*}

\subsection{Performance of Supervised Learning} 
We select the \textit{Master-AP} method as the label generator algorithm and use the cross-entropy loss function of \eqref{eq: LF-supvervised}. The models are trained iteratively to minimize the gap between the predictions and the ground truth labels. Fig. \ref{fig:sup-loss}, which displays the proportion of training loss reduction relative to the initial loss value versus the training epochs. Both the DNN-based models and conventional HQCNN \cite{hur2022quantum} exhibit slower convergence due to their increased computational overhead and more complex optimization landscapes, requiring more than $20$ epochs to converge. By contrast, the proposed HQCNN converges significantly faster, stabilizing within approximately $7$ epochs, showing the advantage of our design. Moreover, according to the Google Quantum AI document \cite{GoogleQuantumAI}, the estimated quantum simulation time of a 16-qubit model is $0.009$ seconds and $0.023$ seconds for a classical c2-standard-60 CPU processor in case of noiseless and noisy scenarios, respectively. This shows that QML has the potential to speed up the computational process even at the simulation level. Note that the processing speed of the quantum system significantly depends on the hardware configuration.

\subsection{Performance of Unsupervised Learning} 

In Table~\ref{tab:EPAS-comparison}, we compare the performance of the learning-based benchmarks and EPAS, scaling up the system from small to medium size. In a small system, the EPAS provides the optimal solution at the highest cost, which becomes excessively complex if $K > 15$. Both our proposed HQCNN and the HQCNN of \cite{hur2022quantum} achieve approximately   $98\%$ of the global optimum, becoming the second-best benchmark. However, the hybrid quantum models still remain effective as the system-scale is expanding, demonstrating robustness in scenarios where the EPAS fails to find the solution, clearly outperforming other methods in all tested scenarios. The DNN-based models' performance is consistently lower than that of ours, with approximately $5\%$, $10\%$, and $13\%$ lower for heavy-CNN, light-CNN, and MLP, respectively. We also observe that heuristic methods such as RPA, GPA, MPA, and LPA, consistently yield lower average sum-rates than the proposed HQCNN model. The former becomes even more inferior when the system scales up. For example, in small-scale settings of $(M,L,K,\tau_p) = (10,2,6,3)$, the best-performing heuristic method (MPA) achieves $17.21$ (Mbps), $4.8\%$ lower than our model $(18.08\%)$. As the system scales to medium sized configurations (e.g., $M=30\sim45$), the limitations of heuristic methods become more pronounced. Quantitatively, the exhibit performance gaps up to $20.4\%$. For instance, at $M=45$, $K=20$, LPA yields $111.05$ Mbps, compared to our HQCNN's $133.75$ Mbps, while GPA and MPA are outperformed by about $17\%$. These methods rely on localized optimization criteria, such as minimizing interference for individual UEs, specific APs, or within spatially bounded regions. These strategies falter in complex multi-user environments. For example, GPA prioritizes low-performance users and overlooks high-rate UEs; MPA restricts decisions to the single strongest AP per user; and LPA fails when UEs are located densely in specific areas. These observations have shown the effectiveness of our model in attaining superior sum-rate over other methods.

\begin{table*}[t]
    \centering
    \caption{Average sum-rate (Mbps) comparison in small-scale and medium-scale cell-free massive MIMO systems (unsupervised).}
    \renewcommand{\arraystretch}{1.2} 
    \resizebox{0.8\textwidth}{!}{ 
    \begin{tabular}{|c|ccc|cccccccccc|}
        \toprule
        \textbf{System size}    & $M,L$     & $K$   & $\tau_p$  & \textbf{Random} & \textbf{Greedy \cite{ngo2017cell}}   & \textbf{Master-AP \cite{zaher2022learning}}  & \textbf{Location-based \cite{nguyen2023efficient}}   & \textbf{MLP}  & \textbf{\makecell{Light-CNN}} & \textbf{\makecell{Heavy-CNN}} & \textbf{HQCNN \cite{hur2022quantum}}  & \textbf{Ours} &  \textbf{EPAS} \\
        \midrule
Small  & $(10,2)$ & $6$  & $3$ & $15.75$ & $15.76$ & $17.21$ & $16.50$ & $17.14$ & $17.20$ & $17.51$ & $18.05$ & $ \underline{18.08}$   & $\bm{18.43}$  \\
       & $(10,2)$ & $9$  & $3$ & $18.34$ & $19.07$ & $22.13$ & $20.67$ & $21.82$ & $22.93$ & $23.13$ & $23.80$ & $ \underline{23.84}$   & $\bm{24.29}$  \\
       & $(10,2)$ & $12$ & $3$ & $23.20$ & $24.76$ & $26.22$ & $25.54$ & $25.73$ & $27.18$ & $27.79$ & $28.19$ & $ \underline{28.19}$   & $\bm{28.74}$  \\
       & $(10,2)$ & $15$ & $3$ & $27.18$ & $33.68$ & $35.03$ & $34.55$ & $33.89$ & $35.71$ & $36.85$ & $37.36$ & $ \underline{37.45}$   & --            \\
       \midrule
Medium  & $(30,2)$ & $20$  & $10$ & $67.67$ & $82.93 $  & $78.93 $ & $82.16 $ & $77.65 $ & $89.77 $ & $93.30 $ & $99.41$ & $ \underline{99.51 }$   & --  \\
        & $(35,2)$ & $20$  & $10$ & $78.34$ & $96.98 $  & $90.98 $ & $94.45 $ & $89.69 $ & $103.34$ & $109.95$ & $115.44$ & $\underline{115.44}$   & --  \\
        & $(40,2)$ & $20$  & $10$ & $88.43$ & $102.31$  & $98.20 $ & $100.12$ & $96.51 $ & $111.17$ & $118.79$ & $123.72$ & $\underline{124.56}$   & --  \\
        & $(45,2)$ & $20$  & $10$ & $91.22$ & $112.45$  & $110.45$ & $111.05$ & $109.60$ & $122.31$ & $130.20$ & $133.04$ & $\underline{133.75}$   & --  \\
        \bottomrule
    \end{tabular}%
    }
    \label{tab:EPAS-comparison}
    \vspace{-4mm}
\end{table*}

To characterized the performance versus complexity, we consider an unconventional efficiency metric, the ratio of the sum-rate and the number of parameters as seen in Fig. \ref{fig:param_efficiency}. Although, the performance of our model and the HQCNN in \cite{hur2022quantum} are equal, our model obtains higher per-parameter-throughput since our design utilizes less quantum trainable parameters. However, the heavy CNN can obtain higher total throughput than the light CNN and MLP, it requires much more hardware resources, leading to the lowest parameter efficiency. These results suggest that our model may indeed be deemed efficient over benchmarks. 

\vspace{-2mm}
\subsection{Limitation}

Despite its potential advantages, HQCNN remains constrained by the limitations of the near-term quantum hardware. The model is particularly sensitive to noise sources such as gate noise, shot noise, and decoherence, which accumulate and degrade the performance during training. Depolarizing noise, arising from imperfect gate operations, leads to information loss, and shot noise reflects statistical fluctuations, owing to using a finite number of measurement samples. While shot noise can be reduced by increasing the number of measurement shots, trading between operation time and the accuracy of circuits, depolarizing can be mitigated by zero-noise extrapolation (ZNE) techniques, which run the circuit at different noise levels and extrapolate back to the zero-noise limit \cite{larose2022mitiq}. Decoherence, caused by environmental interactions, results in the erosion of quantum behavior. This can be addressed by adopting low-depth PQCs to keep circuit execution time within coherence limits \cite{du2025quantum}. Accordingly, HQCNN utilizes depth-6 PQCs to minimize decoherence risk.

The impact of noise at different number of qubits is illustrated in Fig. 3c for $(M,L,K,\tau_p) = (40,2,20,10)$. The proposed model falters in noisy environments, at a depolarization error rate of $0.1$, especially upon increasing the number of qubits. However, the ZNE technique can significantly mitigate depolarizing errors. These observations indicate that increasing the number of qubits can theoretically enhance the computational power of the model by providing richer feature maps. However in practical noisy systems, ZNE might inflict additional errors, hence degrading the performance.

\section{Conclusion}

A powerful hybrid quantum CNN model was conceived to assign the pilot signals in cell-free massive MIMO systems by maximizing the total ergodic throughput. The proposed HQCNN utilizes parameterized quantum circuits for efficient feature extraction, leveraging quantum-domain advantages, including superposition and entanglement. Our hybrid model significantly reduces trained parameters, so resulting in faster convergence and low cost. Numerical results showed that HQCNN obtained a near-optimal ergodic throughput for small-scale systems and  competitive performance in large-scale systems, highlighting its efficiency in pilot assignment tasks. 

\vspace{-1.5mm}

\appendix
\section{Proof of Lemma \ref{Lemma3}}
\label{appendix:A}
Let us view the pilot assignment in Problem~\eqref{opt:problem} as a mapping. The proposed HQCNN model should be interpreted as: 1) the MLP for pre-processing $f_0(\pmb{\beta};\boldsymbol{\theta}_0): \mathbb{R}^{M\times K} \rightarrow \mathbb{R}^{n_{\mathrm{in}}}$; 2) the QCNN delighted as, $f_1(\mathbf{z}_0;\pmb{\theta}_1): \mathbb{R}^{n_{\mathrm{in}}} \rightarrow \mathbb{R}^{n_{\mathrm{out}}}$, where $\mathbf{z}_0 = f_0(\pmb{\beta};\pmb{\theta}_0)$; and 3) the MLP for post-processing $f_2(\mathbf{z}_1;\pmb{\theta}_2): \mathbb{R}^{n_{\mathrm{out}}} \rightarrow \mathbb{R}^{K\times \tau_p}$, where $\mathbf{z}_1 = f_1(\mathbf{z}_0;\pmb{\theta}_1)$. The overall mapping of the proposed HQCNN model is 
   \begin{equation}
        f(\pmb{\beta};\boldsymbol{\theta}) \!= \!f_2(f_1[f_0(\pmb{\beta};\pmb{\theta}_0);\pmb{\theta}_1];\pmb{\theta}_2): \!\mathbb{R}^{M\times K} \rightarrow \mathbb{R}^{K\times \tau_p}.
    \end{equation}
According to the universal approximation theorems of classical ML \cite{HORNIK1989359},  quantum ML \cite{gonon2023universalapproximationtheoremerror}, and  combining the three main mappings, the  error bound of $f(\boldsymbol{\beta};\boldsymbol{\theta})$ is formulated as
\begin{equation} \label{eq:Bound}
    \begin{aligned}
            &\|f(\pmb{\beta};\boldsymbol{\theta})-g^\ast({\pmb{\beta}})\|_F = \! \|f_2(f_1(f_0(\boldsymbol{\beta};\boldsymbol{\theta}_0);\boldsymbol{\theta}_1);\boldsymbol{\theta}_2) - g^\ast({\boldsymbol{\beta}})\|_F\\
            &\stackrel{(a)}{\leq}   \|f_0(\boldsymbol{\beta};\boldsymbol{\theta}_0)-g_1^*({\boldsymbol{\beta}})\|_F + w_0\|f_1(\boldsymbol{z}_0;\boldsymbol{\theta}_1)-g_2^*(\boldsymbol{z}_0)\|_F \\
            & + w_1\|f_2(\boldsymbol{z}_1;\boldsymbol{\theta}_2)-g_3^*(\boldsymbol{z}_1)\|_F 
            \stackrel{(b)}{\leq} \epsilon_{0} + w_0\epsilon_{1}n^{-0.5} + w_1\epsilon_{2}, 
    \end{aligned}
\end{equation}
     where $g^\ast(\pmb{\beta})$ is the ideal mapping  of the large-scale fading coefficients $\pmb{\beta}$; $\epsilon_0$ and $\epsilon_2$ are the tolerable errors of the MLP; $\epsilon_1$ is the quantum approximation error depending both on the number of qubits and on the design of the PQC; $w_0$ and $w_1$ represent the relative error contributions, which demonstrate that both $f_1(\cdot;\cdot)$ and $f_2(\cdot; \cdot)$ are affected by the output of the previous blocks \cite{saltelli2008global}. In \eqref{eq:Bound}, $(a)$ is obtained by exploiting the sequential nature of $f(\pmb{\beta};\pmb{\theta})$ as a composition of the functions that the summation of errors is bounded across each stage \cite[Chapter~5]{shalev2014understanding} and $(b)$ is derived by the universal approximation theorem. Since $w_0, w_1 \rightarrow 1$ as $\epsilon_{0}, \epsilon_{1} \rightarrow 0$,  the total error can be made arbitrarily small by choosing a sufficiently large number of qubits $n$ and appropriately designing the classical models. Thus, an HQCNN can find the ideal mapping function between the large-scale coefficients and the pilot assignment strategy, which completes the proof.


\begingroup
\footnotesize
\bibliographystyle{IEEEtran}
\bibliography{reference-letter}
\endgroup

\end{document}